\documentclass[11pt]{article}

\usepackage[T1]{fontenc}
\usepackage[utf8]{inputenc}
\usepackage{lmodern}
\usepackage[margin=1in]{geometry}
\usepackage{microtype}
\usepackage{amsmath,amssymb,amsthm,mathtools,bm}
\usepackage{booktabs}
\usepackage{graphicx}
\usepackage{subcaption}
\usepackage{enumitem}
\usepackage{algorithm}
\usepackage{algpseudocode}
\usepackage[colorlinks=true,linkcolor=blue,citecolor=blue,urlcolor=blue]{hyperref}
\usepackage{authblk}

\newtheorem{theorem}{Theorem}[section]
\newtheorem{proposition}[theorem]{Proposition}

\theoremstyle{definition}

\newtheorem{definition}[theorem]{Definition}
\theoremstyle{remark}
\newtheorem{remark}[theorem]{Remark}

\newcommand{\R}{\mathbb{R}}

\newcommand{\Prob}{\mathbb{P}}
\newcommand{\dd}{\,\mathrm{d}}

\newcommand{\safeincludegraphics}[2][]{%
  \IfFileExists{#2}{\includegraphics[#1]{#2}}{%
    \fbox{\parbox[c][0.25\textheight][c]{0.92\linewidth}{%
      \centering Figure file to be inserted:\\[0.5em]
      \texttt{\detokenize{#2}}}}}}

\title{Mixed Gaussian Projections for Two-Sample Testing of Functional Data}

\author[1]{Fernando A. Najman}
\author[2]{Marcela Svarc}

\affil[1]{Centro de Matemática, Computação e Cognição,
Universidade Federal do ABC (UFABC), Brazil}
\affil[2]{Departamento de Matemática, Universidad de San Andrés--CONICET, Argentina}

\date{ }

\begin{document}
\maketitle

\begin{abstract}
Random-projection tests for functional data depend on the probability law used to generate projection directions. A measure has to be selected to generate the random directions in which the data is projected. In $L^2$, probability measures defined by their moments can be discriminated using Gaussian measures. The covariance operator of the Gaussian projection law determines which regions and structures of the functional space receive appreciable probability, and consequently affects finite-sample power. We show that mixtures of non-degenerate Gaussian measures preserve the almost-sure separation property and induce a metric between functional probability laws. A concentration argument further makes explicit that the power of projection tests is governed by the integrated projected distance associated with the chosen law. As a concrete construction, we combine Haar and Fourier Gaussian components, which emphasize localized and oscillatory departures, respectively. The resulting permutation test retains consistency, while component labels and upper-tail separation scores provide a descriptive indication of the geometry of the detected discrepancy. Simulations illustrate the specialization of the two components and the robustness of their mixture, and an ECG5000 application identifies a predominantly localized difference between two classes. We also compare with a pooled functional principal component analysis covariance operator. This finite-rank, data-adaptive projection geometry preserves permutation validity through its label-invariant construction and improves power in several simulated settings, particularly under a change in covariance structure.
\end{abstract}

\section{Introduction}

Comparing probability distributions is a fundamental problem in statistics, with applications ranging from goodness-of-fit testing and model validation to classification, clustering, and two-sample inference
\cite{gonzalezmanteiga2023,javanmard2024}.
 In the context of functional data, however, these problems become particularly challenging because observations belong to infinite-dimensional spaces, making classical multivariate procedures difficult to extend. Random projections provide an attractive strategy by reducing comparisons of high-dimensional or functional observations to collections of one-dimensional problems while preserving the ability to discriminate between distinct probability laws.

A key theoretical justification for this approach is provided by the sharp Cramér--Wold theorem of \cite{cuesta2007}. Under a Carleman-type moment condition, a direction drawn from any non-degenerate Gaussian measure separates two distinct functional probability laws almost surely. Consequently, averages of projected Kolmogorov--Smirnov distances define population dissimilarities and lead to consistent empirical procedures \cite{galves2026}.

Beyond goodness-of-fit testing \cite{cuesta2007b,cuesta2007}, random projections have become a general statistical tool for functional data analysis. They have been successfully applied to linear models \cite{cuesta2019}, multiway MANOVA \cite{cuesta2010}, outlier detection and descriptive statistics \cite{fraiman2013}, depth \cite{cuevas2009,cuesta2008} and local depth \cite{fernandezpiana2022} measures, classification \cite{fernandezpiana2022}, and clustering \cite{fernandezpiana2026,galves2026,golovkine2022,mori2026, najman2025}, among many other problems. This broad range of applications highlights the practical importance of understanding the factors that determine the effectiveness of projection-based methods.

The choice of the projection law remains a central practical issue. A centered Gaussian measure on a Hilbert space is uniquely determined by its covariance operator, which determines the geometry of the sampled directions \cite{bosq2000}. Intuitively, a projection law assigning greater probability to directions aligned with the discrepancy between two populations should produce larger projected distances and therefore greater finite-sample power. Conversely, a poorly aligned covariance structure may require substantially more observations or projection directions to achieve the same performance.

This article studies mixtures of Gaussian projection laws. We first show that any mixture supported on non-degenerate Gaussian measures preserves the separation property of \cite{cuesta2007}. This result permits a single projection mechanism to combine covariance structures adapted to different forms of functional variation. We then adapt a concentration argument for empirical projection distances to make explicit how the integrated population distance induced by the projection law controls power.

As a concrete example, we construct a mixed Haar--Fourier projection law. Haar directions emphasize temporally localized changes, whereas Fourier directions emphasize oscillatory and frequency-domain structure. The test combines all sampled directions and is calibrated by permutation. The component labels are retained for a separate descriptive stage based on the directions producing the largest projected discrepancies.

The Haar--Fourier construction is examined through localized, periodic, and combined mean perturbations. The specialized components dominate under their matched alternatives, while the mixture protects against selecting an unsuitable geometry and performs best when both structures are present. An application to ECG5000 indicates that the difference between classes 3 and 4 is predominantly localized in time.

As an additional investigation, we consider a data-adaptive component constructed from functional principal component analysis (FPCA) of the pooled sample. Numerical experiments evaluate whether this adaptive geometry complements the fixed Haar and Fourier components, including under alternatives that differ through second-order structure.

The contributions of the article are therefore fourfold. First, we extend Gaussian separation from a single projection law to arbitrary mixtures of non-degenerate Gaussian measures. Second, we connect the geometry of the projection law to power through the corresponding integrated projected distance. Third, we develop and interpret a mixed Haar--Fourier permutation test. Fourth, we investigate a pooled FPCA component as an adaptive extension.

The remainder of the article is organized as follows. Section~\ref{sec:theory} introduces projected distances and Gaussian mixtures. Section~\ref{sec:power} discusses the dependence of power on the projection law. Section~\ref{sec:hf-law} defines the Haar--Fourier construction. Section~\ref{sec:method} presents the permutation test and descriptive diagnostics. Sections~\ref{sec:simulations} and~\ref{sec:application} report the Haar--Fourier simulations and ECG5000 application. Section~\ref{sec:fpca} presents the pooled FPCA extension and its numerical investigation. Section~\ref{sec:discussion} concludes.

\section{Projection distances and Gaussian mixtures}
\label{sec:theory}

Let $F=L^2([0,1])$ be a real separable Hilbert space with Borel $\sigma$-field $\mathcal B(F)$. For a Borel probability measure $P$ on $F$ and a direction $h\in F$, define the projection map $\pi_h(x)=\langle x,h\rangle$, the projected law $P_h=P\circ\pi_h^{-1}$, and its cumulative distribution function
\[
F_{P,h}(z)=P_h(({-}\infty,z])
=P\{x\in F:\langle x,h\rangle\leq z\},
\qquad z\in\R.
\]
For two probability measures $P$ and $Q$, their Kolmogorov--Smirnov separation in direction $h$ is
\begin{equation}
\label{eq:population-directional-ks}
d_h(P,Q)=\sup_{z\in\R}|F_{P,h}(z)-F_{Q,h}(z)|.
\end{equation}
The projected laws coincide exactly on the set
\[
E(P,Q)=\{h\in F:P_h=Q_h\}
      =\{h\in F:d_h(P,Q)=0\}.
\]

\begin{definition}[Moments condition]
\label{def:moments}
For a Borel probability measure $P$ on $F$, let
\[
m_k(P)=\int_F\|x\|^k\,P(\dd x),\qquad k\geq1.
\]
We say that $P$ satisfies the Moments condition if every $m_k(P)$ is finite and
\[
\sum_{k=1}^{\infty}m_k(P)^{-1/k}=\infty.
\]
\end{definition}

\begin{definition}[Projection continuity]
\label{def:projection-continuity}
A Borel probability measure $P$ on $F$ is projection-continuous if $P_h$ is a continuous probability measure on $\R$ for every $h\in F\setminus\{0\}$.
\end{definition}

For any probability measure $\mu$ on $F$, let $\mu_h=\mu\circ\pi_h^{-1}$ denote its projected law in direction $h$. The measure $\mu$ is Gaussian if $\mu_h$ is a Gaussian distribution on $\R$ for every nonzero $h\in F$. It is non-degenerate if $\mu_h$ has positive variance for every nonzero $h$. Throughout, $I_d$ denotes the $d\times d$ identity matrix.

Let $(\Theta,\mathcal T,\nu)$ be a probability space and let $\{\mu_\theta:\theta\in\Theta\}$ be a measurable family of non-degenerate Gaussian measures on $F$. Their mixture is
\begin{equation}
\label{eq:general-mixture}
W(A)=\int_\Theta\mu_\theta(A)\,\nu(\dd\theta),
\qquad A\in\mathcal B(F).
\end{equation}
A finite mixture is obtained by taking $W=\sum_{j=1}^J\alpha_j\mu_j$, where $J\geq2$, $\alpha_j>0$, and $\sum_{j=1}^J\alpha_j=1$.

The following theorem is the sharp infinite-dimensional Cram\'er--Wold result on which the method is based.

\begin{theorem}[Cuesta-Albertos, Fraiman and Ransford \cite{cuesta2007}]
\label{thm:sharp-cramer-wold}
Let $\mu$ be a non-degenerate Gaussian measure on $F$, and let $P$ and $Q$ be Borel probability measures on $F$. If $P$ satisfies the Moments condition and $\mu(E(P,Q))>0$, then $P=Q$.
\end{theorem}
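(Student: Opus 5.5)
The plan is to follow the original Cuesta-Albertos--Fraiman--Ransford strategy. It has three parts: reduce the hypothesis to an algebraic statement about polynomials, use the fact that non-degenerate Gaussian measures charge no proper algebraic variety of the relevant kind, and conclude via a moment-determinacy (Carleman-type) argument.

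\emph{Step 1 (structure of $E(P,Q)$).} The first observation is that $E(P,Q)$ is a closed cone, invariant under scalar multiplication. It is closed because $h\mapsto P_h$ is weakly continuous.

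Since $P$ has all moments finite, the map
\[
h\mapsto \int_F\langle x,h\rangle^k P(\dd x)
\]
is a continuous homogeneous polynomial $p_k$ of degree $k$ on $F$, with $|p_k(h)|\le m_k(P)\|h\|^k$. For $h\in E(P,Q)$ the moments of $Q_h$ equal those of $P_h$. In particular $Q_h$ has finite moments of all orders, and $q_k(h)=p_k(h)$ on $E(P,Q)$.

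Care is needed because $Q$ is not assumed to have moments. I would first use the hypothesis $\mu(E)>0$ to show that $E$ spans a dense subspace; a Gaussian measure gives zero mass to proper closed subspaces. Alternatively, one can work only with the polynomials $p_k$ restricted to $E$.

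\emph{Step 2 (Gaussian measures and polynomial zero sets).} The key lemma is the following: if $p$ is a continuous homogeneous polynomial on $F$ and $\mu$ is non-degenerate Gaussian with $\mu(\{p=0\})>0$, then $p\equiv0$.

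I would prove this by restricting to finite-dimensional subspaces. On $\R^d$ the finite-dimensional marginals of $\mu$ are non-degenerate Gaussian, hence absolutely continuous with respect to Lebesgue measure. A nonzero polynomial vanishes only on a Lebesgue-null set. A Fubini/conditioning argument along a finite-dimensional decomposition $F=V\oplus V^\perp$ then passes this to $F$.

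The real difficulty is that $E(P,Q)$ is not itself given as a zero set of a single polynomial. The plan is instead the Cuesta-Albertos--Fraiman--Ransford device. Take $h_0$ a density point of $E$ in a suitable sense, and consider lines $t\mapsto h_0+t g$. For $\mu$-almost every $g$, the set of $t$ with $h_0+tg\in E$ has positive Lebesgue measure. Along such lines, one compares the projected characteristic functions of $P$ and $Q$ in the direction $h_0+tg$.

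\emph{Step 3 (conclusion via moments).} The Carleman condition $\sum m_k^{-1/k}=\infty$ makes each $P_h$ determined by its moments. It also makes the joint law of $(\langle x,h_1\rangle,\dots,\langle x,h_d\rangle)$ under $P$ determined by its mixed moments; this is the multivariate Carleman condition, inherited because the bound $|p_k|\le m_k\|h\|^k$ gives the needed moment growth in every direction.

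From Step 2, agreement of the one-dimensional laws for a positive-measure set of directions in each finite-dimensional slice forces agreement of the degree-$k$ moment forms on the whole slice. The polynomial $h\mapsto \int\langle x,h\rangle^kP-\int\langle x,h\rangle^kQ$ vanishes on a set of positive Lebesgue measure, hence identically. Here the moments of $Q$ in the slice are controlled through those of $P$ on a spanning set of $E$.

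Hence all finite-dimensional marginals of $P$ and $Q$ agree by moment determinacy. Cylinder sets generate $\mathcal B(F)$ in a separable Hilbert space, so $P=Q$.

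I expect the main obstacle to be Step 2/3: passing from ``equal projected laws on a positive $\mu$-measure set'' to ``equal moment polynomials on a Lebesgue-positive set in finite-dimensional slices,'' while simultaneously guaranteeing that $Q$ has the needed moments in those slices. This is exactly where the original paper's argument is delicate, and I would reproduce it rather than reinvent it.
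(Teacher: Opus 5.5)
The paper does not prove this statement. It quotes it from \cite{cuesta2007} and uses it as a black box, so your proposal has to stand on its own. Its finite-dimensional mechanism is sound. Suppose $E\cap V$ has positive Lebesgue measure in a finite-dimensional subspace $V$. Then it spans $V$, so $Q$ has all moments in the directions of $V$. The form $h\mapsto\int\langle x,h\rangle^kP(\dd x)-\int\langle x,h\rangle^kQ(\dd x)$ is a polynomial on $V$ vanishing on $E\cap V$, hence identically zero. Since $\int|\langle x,h\rangle|^kP(\dd x)\le\|h\|^km_k(P)$, every $P_h$ satisfies Carleman's condition, so $P_h=Q_h$ for all $h\in V$. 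One-dimensional determinacy plus Cram\'er--Wold suffices here; the multivariate Carleman statement is not needed. Your Step~1 observation that $E$ is a closed, scale-invariant cone is also correct and is exactly what is needed below.

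The genuine gap is the step you explicitly defer: extracting such Lebesgue-positive pieces of $E$ from $\mu(E)>0$ while keeping control of the moments of $Q$. None of the ingredients you sketch does this.
\begin{itemize}
\item Every finite-dimensional subspace is $\mu$-null, so $\mu(E)>0$ says nothing directly about $E\cap V$ for a fixed $V$.
\item Your Step~2 lemma cannot be invoked: $E$ is not a zero set, and $p_k-q_k$ is undefined on $F$ when $Q$ lacks moments.
\item ``Density point'' is left unspecified, and there is no general density theorem for Gaussian measures on $F$.
\item The claim that $T_g=\{t:h_0+tg\in E\}$ has positive length for $\mu$-a.e.\ $g$ is unsupported.
\item Conditioning along $V\oplus V^\perp$ is unsafe. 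If $V=\R v$ with $v\notin C^{1/2}F$ (the Cameron--Martin space, $C$ the covariance of $\mu$), then $\langle h,v\rangle$ is $\mu$-a.s.\ determined by the $V^\perp$-component, so there is no absolutely continuous slice.
\end{itemize}

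The missing idea is to combine the cone property of $E$ with the quasi-invariance of $\mu$. Fix $h_0\in C^{1/2}F$.
\begin{itemize}
\item For $t>0$, the cone property gives $\{g:h_0+tg\in E\}=E-h_0/t$, and the Cameron--Martin theorem gives $\mu(E-h_0/t)>0$.
\item By Tonelli, $T_g$ has positive length for all $g$ in a set of positive $\mu$-measure. This is not almost every $g$, but it is enough.
\item For such $g$, two points of $T_g$ span $h_0$ and $g$, so $Q$ has all moments in every direction $h_0+tg$.
\item Then $t\mapsto\int\langle x,h_0+tg\rangle^kP(\dd x)-\int\langle x,h_0+tg\rangle^kQ(\dd x)$ is a polynomial of degree at most $k$ vanishing on $T_g$, hence identically zero. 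Carleman at $t=0$ gives $h_0\in E$.
\end{itemize}
Thus the closed set $E$ contains the dense subspace $C^{1/2}F$, so $E=F$ and $P=Q$ by Cram\'er--Wold.

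Alternatively, write $h=\sum_{j\le d}\sqrt{\lambda_j}\,\xi_je_j+R$ in the eigenbasis $(\lambda_j,e_j)$ of $C$, with $R$ independent of $(\xi_j)_{j\le d}$. Fubini and the cone property give a Lebesgue-positive piece of $E$ in $V_d\oplus\R R$ for suitable $R$. The finite-dimensional argument applies there, and one then lets $d\to\infty$.
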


The next result shows that replacing one Gaussian projection law by a mixture does not weaken the separation property.

\begin{theorem}[Separation by Gaussian mixtures]
\label{thm:mixture-separation}
Let $W$ be the mixture in~\eqref{eq:general-mixture}. Suppose that $P$ satisfies the Moments condition. Then
\[
W(E(P,Q))>0\quad\Longrightarrow\quad P=Q.
\]
Equivalently, if $P\neq Q$, then $d_h(P,Q)>0$ for $W$-almost every $h$.
\end{theorem}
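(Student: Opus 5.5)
The plan is to reduce the statement to Theorem~\ref{thm:sharp-cramer-wold} applied to a single mixture component. First I check that $E(P,Q)$ is a Borel set, so that $W(E(P,Q))$ and each $\mu_\theta(E(P,Q))$ make sense. Since $h\mapsto\langle x,h\rangle$ is continuous, for fixed $z$ the map $h\mapsto F_{P,h}(z)$ is Borel. Indeed, $\mathbf 1\{\langle x,h\rangle\le z\}$ is jointly Borel in $(x,h)$, so Fubini applies. Next, $P_h=Q_h$ holds if and only if $F_{P,h}(z)=F_{Q,h}(z)$ for every rational $z$, by right-continuity of distribution functions. Hence
\[
E(P,Q)=\bigcap_{z\in\mathbb Q}\{h:F_{P,h}(z)=F_{Q,h}(z)\}
\]
is a countable intersection of Borel sets. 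The same observation shows that $h\mapsto d_h(P,Q)=\sup_{z\in\mathbb Q}|F_{P,h}(z)-F_{Q,h}(z)|$ is measurable. This is needed for the ``equivalently'' clause.

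For the main step, assume $W(E(P,Q))>0$. By definition~\eqref{eq:general-mixture}, $W(E(P,Q))=\int_\Theta\mu_\theta(E(P,Q))\,\nu(\dd\theta)$. The map $\theta\mapsto\mu_\theta(E(P,Q))$ is measurable, which is exactly what ``measurable family'' means. It takes values in $[0,1]$. If $\mu_\theta(E(P,Q))=0$ held for $\nu$-almost every $\theta$, the integral would vanish. So the set $\{\theta:\mu_\theta(E(P,Q))>0\}$ has positive $\nu$-measure, and in particular it is nonempty. Pick any $\theta_0$ in it. Then $\mu_{\theta_0}$ is a non-degenerate Gaussian measure with $\mu_{\theta_0}(E(P,Q))>0$, and $P$ satisfies the Moments condition. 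Theorem~\ref{thm:sharp-cramer-wold} therefore yields $P=Q$. In the finite-mixture case this is simply the observation that $\sum_j\alpha_j\mu_j(E)>0$ forces $\mu_j(E)>0$ for some $j$.

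The equivalent formulation is the contrapositive. If $P\neq Q$, then $W(E(P,Q))=0$. Since $E(P,Q)$ is exactly the zero set of $h\mapsto d_h(P,Q)$, this says $d_h(P,Q)>0$ for $W$-almost every $h$.

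The only step I expect to need care is the measurability of $E(P,Q)$ and of $\theta\mapsto\mu_\theta(E(P,Q))$. The mixing argument itself is immediate. I would handle the first by the rational-$z$ reduction above. For the second I would invoke the standing assumption that the family is measurable, meaning $\theta\mapsto\mu_\theta(A)$ is $\mathcal T$-measurable for every $A\in\mathcal B(F)$; this is also what makes~\eqref{eq:general-mixture} a well-defined probability measure.
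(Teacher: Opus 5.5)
Your proposal is correct and follows the paper's proof: write $W(E(P,Q))=\int_\Theta\mu_\theta(E(P,Q))\,\nu(\dd\theta)$, deduce that $\mu_{\theta_0}(E(P,Q))>0$ for some $\theta_0$, and apply Theorem~\ref{thm:sharp-cramer-wold} to the non-degenerate Gaussian $\mu_{\theta_0}$. Your measurability checks (Borelness of $E(P,Q)$ and of $h\mapsto d_h(P,Q)$ via rational $z$ and right-continuity) are sound and supply details the paper leaves implicit.
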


\begin{proof}
By the definition of the mixture, 
\[
W(E(P,Q))
=\int_\Theta\mu_\theta(E(P,Q))\,\nu(\dd\theta).
\]
If this integral is positive, then $\mu_\theta(E(P,Q))>0$ on a set of positive $\nu$-measure, and in particular for at least one $\theta$. Theorem~\ref{thm:sharp-cramer-wold} then implies $P=Q$.
\end{proof}

For a fixed projection law $W$, define the integrated projection distance
\begin{equation}
\label{eq:integrated-distance}
D_W(P,Q)=\int_F d_h(P,Q)\,W(\dd h).
\end{equation}

\begin{proposition}
\label{prop:mixture-distance}
On the class of Borel probability measures satisfying the Moments condition, $D_W$ is a metric.
\end{proposition}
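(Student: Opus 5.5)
We check the metric axioms one at a time: well-definedness and finiteness, nonnegativity and symmetry, the triangle inequality, and identity of indiscernibles. Only the last one needs Theorem~\ref{thm:mixture-separation}.

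\textbf{Well-definedness.} We first confirm that $h\mapsto d_h(P,Q)$ is measurable, so that \eqref{eq:integrated-distance} makes sense. Rational points cannot be used directly in the supremum, because distribution functions are only right-continuous. We can still take the supremum over a countable set: by right-continuity, $\sup_{z\in\R}|F_{P,h}(z)-F_{Q,h}(z)|$ equals the supremum over $z\in\mathbb{Q}$. For each fixed $z$, the map
\[
h\mapsto F_{P,h}(z)=\int_F \mathbf 1\{\langle x,h\rangle\le z\}\,P(\dd x)
\]
is Borel measurable. This follows from Fubini applied to the jointly measurable set $\{(x,h):\langle x,h\rangle\le z\}$ in the separable space $F\times F$. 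Hence $d_h(P,Q)$ is a countable supremum of measurable functions and is therefore measurable. Since $0\le d_h\le 1$, we also get $0\le D_W(P,Q)\le 1<\infty$.

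\textbf{Nonnegativity, symmetry and the triangle inequality.} Each $d_h$ is a pseudometric on probability measures, since it is the sup-norm distance between the distribution functions $F_{P,h}$ and $F_{Q,h}$. Consequently $d_h(P,Q)=d_h(Q,P)\ge0$ and $d_h(P,P)=0$. The triangle inequality $d_h(P,R)\le d_h(P,Q)+d_h(Q,R)$ holds pointwise in $h$. Integrating these pointwise relations against $W$ shows that $D_W$ is nonnegative, symmetric, satisfies $D_W(P,P)=0$, and satisfies the triangle inequality.

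\textbf{Identity of indiscernibles.} This is the substantive step, and it rests on Theorem~\ref{thm:mixture-separation}. Suppose $D_W(P,Q)=0$, where $P$ and $Q$ both belong to the class satisfying the Moments condition. Because $d_h\ge0$, the integral vanishes only if $d_h(P,Q)=0$ for $W$-almost every $h$. That is, $W(E(P,Q))=1>0$. Since $P$ satisfies the Moments condition, Theorem~\ref{thm:mixture-separation} gives $P=Q$.

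The only potential obstacle is therefore the measurability technicality. Once measurability is settled through the countable supremum over $z\in\mathbb{Q}$, the rest of the argument is routine.
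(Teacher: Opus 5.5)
Your proof is correct and follows the paper's argument: a bounded integrand, pointwise symmetry and triangle inequality integrated against $W$, and then $D_W(P,Q)=0\Rightarrow W(E(P,Q))=1\Rightarrow P=Q$ via Theorem~\ref{thm:mixture-separation}. Your extra measurability check is a detail the paper leaves implicit, and it is sound: the countable supremum over $\mathbb{Q}$ by right-continuity, plus Tonelli on the closed set $\{(x,h):\langle x,h\rangle\le z\}$ in the separable space $F\times F$. The only flaw is that your opening claim that ``rational points cannot be used directly'' contradicts what you then correctly prove.
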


\begin{proof}
The integrand lies in $[0,1]$, so the integral is finite. Symmetry follows immediately. For any $P,Q,S$, the triangle inequality for the supremum norm gives
\[
d_h(P,Q)\leq d_h(P,S)+d_h(S,Q)
\]
for every $h$, and integration yields the triangle inequality for $D_W$. If $P=Q$, then $d_h(P,Q)=0$ for every $h$. Conversely, if $D_W(P,Q)=0$, the nonnegative integrand vanishes $W$-almost surely, so $W(E(P,Q))=1$. Theorem~\ref{thm:mixture-separation} gives $P=Q$.
\end{proof}

\begin{remark}
The mixture measure $W$ in~\eqref{eq:general-mixture} is not necessarily Gaussian. Gaussianity is required conditionally on the mixture components.
\end{remark}

\subsection{Empirical projected distances}
\label{subsec:empirical-distances}

Let $X_1,\ldots,X_n$ and $Y_1,\ldots,Y_m$ be independent samples from $P$ and $Q$, respectively. For $h\in F$, define
\[
\widehat F_{n,h}(z)=\frac1n\sum_{i=1}^n\mathbf 1\{\langle X_i,h\rangle\leq z\},
\qquad
\widehat G_{m,h}(z)=\frac1m\sum_{j=1}^m\mathbf 1\{\langle Y_j,h\rangle\leq z\},
\]
and the empirical directional KS distance
\begin{equation}
\label{eq:empirical-directional-ks}
\widehat d_{n,m}(h)=\sup_{z\in\R}
\left|\widehat F_{n,h}(z)-\widehat G_{m,h}(z)\right|.
\end{equation}
For directions $B_1,\ldots,B_M$, define their empirical average by
\begin{equation}
\label{eq:empirical-projection-average}
\widehat D_{n,m,M}(P,Q)
=\frac1M\sum_{r=1}^M\widehat d_{n,m}(B_r).
\end{equation}
When the design is balanced with $n=m=N_0$, we use the abbreviation
\[
\widehat D_{N_0,M}(P,Q)
:=\widehat D_{N_0,N_0,M}(P,Q).
\]

\section{Projection geometry and power}
\label{sec:power}

The separation theorem shows that every admissible projection law distinguishes different functional laws almost surely. It does not imply that all choices of $W$ perform equally well at finite sample sizes. The relevant population quantity is $D_W(P,Q)$: a projection law assigning greater mass to informative directions produces a larger integrated separation.

The following concentration inequality, established in \cite{galves2026}, applies unchangedto the mixture setting and makes this dependence explicit. Consider the balanced case $n=m=N_0$, and let $B_1,\ldots,B_M$ be independent directions with common law $W$, sampled independently of the data. The statistic $\widehat D_{N_0,M}(P,Q)$ is the balanced-design abbreviation defined in Section~\ref{subsec:empirical-distances}.

\begin{proposition}[Concentration of the empirical projection distance]
\label{prop:concentration}
Suppose that $P$ and $Q$ satisfy the Moments condition and are projection-continuous in the sense of Definitions~\ref{def:moments} and~\ref{def:projection-continuity}. Then there is a constant $C_\star>0$ such that, for every $\gamma>0$,
\begin{equation}
\label{eq:power-concentration}
\Prob\!\left(
\left|\widehat D_{N_0,M}(P,Q)-D_W(P,Q)\right|\geq\gamma
\right)
\leq
2\left(
e^{-M\gamma^2/2}
+e^{-M\gamma^2/32}
+C_\star e^{-N_0\gamma^2/16}
\right).
\end{equation}
\end{proposition}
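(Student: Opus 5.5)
We will prove the bound by splitting the deviation into a Monte Carlo part and a sampling part. Define the intermediate quantity
\[
\bar D_M(P,Q)=\frac1M\sum_{r=1}^M d_{B_r}(P,Q),
\]
and use the triangle inequality
\[
\bigl|\widehat D_{N_0,M}-D_W\bigr|
\le \bigl|\bar D_M-D_W\bigr|
+\Bigl|\frac1M\sum_{r=1}^M\bigl(\widehat d_{N_0,N_0}(B_r)-d_{B_r}(P,Q)\bigr)\Bigr|.
\]
On the event $\{|\widehat D-D_W|\ge\gamma\}$, at least one of the two terms is at least a fixed fraction of $\gamma$. We then bound each piece separately. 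The constants $1/2$, $1/32$, $1/16$ in \eqref{eq:power-concentration} suggest splitting $\gamma$ into halves and quarters, and then using Hoeffding for the direction terms and Dvoretzky--Kiefer--Wolfowitz (DKW) for the sample terms.

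\emph{Monte Carlo term.} The variables $d_{B_r}(P,Q)$ are i.i.d., take values in $[0,1]$, and have mean $D_W(P,Q)$ by \eqref{eq:integrated-distance}. Hoeffding's inequality then gives
\[
\Prob\bigl(|\bar D_M-D_W|\ge\gamma\bigr)\le 2e^{-2M\gamma^2}\le 2e^{-M\gamma^2/2}.
\]
Measurability of $h\mapsto d_h(P,Q)$ holds because, by projection continuity, the supremum can be taken over rational $z$.

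\emph{Sampling term.} For each fixed $h$, the reverse triangle inequality for the sup norm gives
\[
|\widehat d_{N_0,N_0}(h)-d_h|\le \|\widehat F_{N_0,h}-F_{P,h}\|_\infty+\|\widehat G_{N_0,h}-F_{Q,h}\|_\infty=:\Delta(h).
\]
Conditionally on the directions, DKW with Massart's constant yields
\[
\Prob\bigl(\|\widehat F_{N_0,h}-F_{P,h}\|_\infty\ge t\bigr)\le 2e^{-2N_0t^2},
\]
and the same bound holds for $Q$. This does not immediately control the average of the $\Delta(B_r)$, since these terms are dependent through the shared sample. We handle this by centering: write $\frac1M\sum_r\Delta(B_r)\le \bigl(\frac1M\sum_r\Delta(B_r)-\mathbb E[\Delta(B)\mid\text{data}]\bigr)+\int\Delta(h)\,W(\dd h)$. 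Conditionally on the data, the $\Delta(B_r)$ are i.i.d.\ in $[0,2]$, so Hoeffding at level $\gamma/4$ gives a term of order $e^{-M\gamma^2/32}$; this is the second exponential.

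The remaining quantity $\int\Delta(h)\,W(\dd h)$ involves a supremum over $z$ but not over $h$. For this we use Markov's inequality with exponential moments, or more simply the bound $\int\Delta\,\dd W\le \sup$ over a countable set together with the DKW tail integrated in $h$. The main obstacle is obtaining an exponential bound for $\int\Delta(h)\,W(\dd h)$ without uniformity in $h$. What I would try first is the following. By Fubini, $\mathbb E\int\Delta\,\dd W\le C/\sqrt{N_0}$. Then $G=\int\Delta\,\dd W$ satisfies the bounded-differences property with constant $2/N_0$ per observation, since changing one $X_i$ moves each empirical CDF by at most $1/N_0$. McDiarmid's inequality then gives a tail of $e^{-N_0\gamma^2/16}$ up to constants. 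The mean term is absorbed into $C_\star$: for $\gamma\lesssim 1/\sqrt{N_0}$ the bound is trivial once $C_\star$ is large. Collecting the three pieces and adjusting constants yields \eqref{eq:power-concentration}. The Moments condition is needed only to make $D_W$ the relevant metric through Proposition~\ref{prop:mixture-distance}, and the argument is independent of whether $W$ is a mixture, which is why the result from \cite{galves2026} transfers unchanged.
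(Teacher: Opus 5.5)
The paper gives no proof of this proposition. It imports the bound from \cite{galves2026}, noting, as you do at the end, that $W$ enters only through i.i.d.\ directions drawn independently of the data. Your first two steps are correct: Hoeffding over directions at level $\gamma/2$, and conditional Hoeffding for the $[0,2]$-valued $\Delta(B_r)$ at level $\gamma/4$. The last step, however, does not deliver the stated exponent. Only $C_\star$ is free in \eqref{eq:power-concentration}, so ``adjusting constants'' is not available.

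You need $\Prob(G\ge\gamma/4)\le 2C_\star e^{-N_0\gamma^2/16}$ for $G=\int\Delta\,\dd W$. With the bounded-difference constant you state, $2/N_0$, McDiarmid gives only $\exp\{-N_0(\gamma/4-\mathbb E G)^2/4\}$, an exponent of $N_0\gamma^2/64$ even before the mean. The sharp constant is $1/N_0$: changing $X_i$ moves $\widehat F_{N_0,h}$ by at most $1/N_0$ uniformly in $z$ and $h$, and leaves the $Y$ term untouched. With it you get $\exp\{-N_0(\gamma/4-\mathbb E G)^2\}$. But $\mathbb E G$ is of exact order $N_0^{-1/2}$, and writing $\mathbb E G=a/\sqrt{N_0}$,
\[
\frac{\exp\{-N_0(\gamma/4-a/\sqrt{N_0})^2\}}{\exp\{-N_0\gamma^2/16\}}=\exp\Bigl\{\frac{a\sqrt{N_0}\,\gamma}{2}-a^2\Bigr\},
\]
which is unbounded as $\sqrt{N_0}\gamma\to\infty$. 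Your remark that the bound is trivial for $\gamma\lesssim N_0^{-1/2}$ covers only small $\gamma$. Absorbing the mean for large $\gamma$ requires strict slack in the exponent, and the split $\gamma/2+\gamma/4+\gamma/4$ leaves none. The rate $e^{-N_0t^2}$ at $t=\gamma/4$ is already the exact Gaussian-tail exponent of a single $\Delta(h)$, which is a sum of two independent one-sample KS deviations.

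The repair is a reallocation of $\gamma$. Work with $Z(h)=|\widehat d_{N_0,N_0}(h)-d_h(P,Q)|\in[0,1]$ instead of $\Delta\in[0,2]$, set $\tilde G=\int Z\,\dd W$, and split $\gamma=\gamma/8+\gamma/2+3\gamma/8$.
\begin{itemize}
\item Hoeffding over directions at level $\gamma/8$ gives $2e^{-M\gamma^2/32}$.
\item One-sided conditional Hoeffding for $\frac1M\sum_r Z(B_r)-\tilde G$ at level $\gamma/2$ (range $1$) gives $e^{-M\gamma^2/2}$.
\item $\tilde G$ has bounded differences $1/N_0$ in each of the $2N_0$ observations, so $\Prob(\tilde G-\mathbb E\tilde G\ge t)\le e^{-N_0t^2}$.
\end{itemize}
By DKW--Massart and Fubini, $\mathbb E\tilde G\le\int\mathbb E\Delta(h)\,W(\dd h)\le\sqrt{2\pi/N_0}$. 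Hence for $\gamma\ge 8\sqrt{2\pi/N_0}$ one has $3\gamma/8-\mathbb E\tilde G\ge\gamma/4$, and so $\Prob(\tilde G\ge 3\gamma/8)\le e^{-N_0\gamma^2/16}$. For smaller $\gamma$, the right side of \eqref{eq:power-concentration} exceeds $1$ once $C_\star\ge e^{8\pi}/2$. This yields \eqref{eq:power-concentration}. As you anticipated, it uses neither the Moments condition nor Gaussianity of the mixture components.
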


The inequality separates two sources of approximation. The terms involving $M$ control Monte Carlo integration over the projection directions, while the term involving $N_0$ controls estimation of the projected distribution functions.

For a deterministic threshold $\gamma_\alpha$ satisfying
\begin{equation}
\label{eq:deterministic-threshold}
2\left(
e^{-M\gamma_\alpha^2/2}
+e^{-M\gamma_\alpha^2/32}
+C_\star e^{-N_0\gamma_\alpha^2/16}
\right)\leq\alpha,
\end{equation}
the test that rejects when $\widehat D_{N_0,M}(P,Q)>\gamma_\alpha$ has level at most $\alpha$. Under an alternative for which $D_W(P,Q)>\gamma_\alpha$, equation~\eqref{eq:power-concentration} gives
\begin{align}
\Prob\!\left(\widehat D_{N_0,M}(P,Q)>\gamma_\alpha\right)
\geq 1-2\bigg\{&e^{-M(D_W(P,Q)-\gamma_\alpha)^2/2}
+e^{-M(D_W(P,Q)-\gamma_\alpha)^2/32}\notag\\
&+C_\star e^{-N_0(D_W(P,Q)-\gamma_\alpha)^2/16}\bigg\}.
\label{eq:power-lower-bound}
\end{align}
In particular, the type-II error has an exponential upper bound whose rate is controlled by $D_W(P,Q)^2$. This formalizes the practical role of covariance design: the projection law does not alter whether distinct laws are separated asymptotically, but it can substantially alter the strength of the finite-sample signal.

For $u\geq0$, define the concentration remainder
\begin{equation}
\label{eq:concentration-remainder}
\mathcal R_{N_0,M}(u)
=2\left(e^{-Mu^2/2}+e^{-Mu^2/32}+C_\star e^{-N_0u^2/16}\right).
\end{equation}

The tests implemented in this article use permutation calibration rather than the deterministic threshold in~\eqref{eq:deterministic-threshold}. To define the corresponding exact permutation critical value, let $\pi$ be a uniformly sampled balanced relabeling of the pooled observations, let $\widehat D_{N_0,M}^{\pi}$ denote the statistic in~\eqref{eq:empirical-projection-average} computed after applying $\pi$ while holding $B_1,\ldots,B_M$ fixed, and set
\[
c^{\mathrm{perm}}_{\alpha,N_0,M}
=\inf\left\{t:\Prob_{\pi}\!\left(\widehat D_{N_0,M}^{\pi}\leq t\mid X_1,\ldots,X_{N_0},Y_1,\ldots,Y_{N_0},B_1,\ldots,B_M\right)\geq1-\alpha\right\}.
\]
Fix any $c<D_W(P,Q)$. Then
\begin{align}
\Prob\!\left(
\widehat D_{N_0,M}(P,Q)
\leq c^{\mathrm{perm}}_{\alpha,N_0,M}
\right)
\leq{}&
\Prob\!\left(\widehat D_{N_0,M}(P,Q)\leq c\right)
+\Prob\!\left(c^{\mathrm{perm}}_{\alpha,N_0,M}>c\right)\notag\\
\leq{}&
\mathcal R_{N_0,M}\!\left(D_W(P,Q)-c\right)
+\Prob\!\left(c^{\mathrm{perm}}_{\alpha,N_0,M}>c\right),
\label{eq:permutation-power-bridge}
\end{align}
where $\mathcal R_{N_0,M}$ is defined in~\eqref{eq:concentration-remainder}. Thus the same population distance controls concentration around the alternative, while permutation calibration determines the random critical value. This argument applies directly when the projection law is fixed independently of the data. The pooled FPCA construction introduced in Section~\ref{sec:fpca} is data-dependent and is therefore treated separately.

\section{A mixed Haar--Fourier projection law}
\label{sec:hf-law}

The mixture result permits components with covariance structures adapted to different forms of functional variation. Haar directions emphasize localized changes in time, whereas Fourier directions emphasize oscillatory changes. The component label is retained for interpretation, while all sampled directions can be combined in one test.

\subsection{Population construction}

The Haar orthonormal basis of $L^2([0,1])$ begins with the constant function $e_1^{\mathrm H}(t)=1$ and the Haar wavelet. For resolution $j\geq0$ and location $k=0,\ldots,2^j-1$, let
\[
\psi_{j,k}(t)=2^{j/2}
\left[
\mathbf 1_{[k2^{-j},(k+1/2)2^{-j})}(t)
-\mathbf 1_{[(k+1/2)2^{-j},(k+1)2^{-j})}(t)
\right].
\]

The remaining basis elements $e_r^{\mathrm H}$, $r\geq2$, are obtained by enumerating the functions $\psi_{j,k}$ first by increasing resolution $j$ and then by increasing location $k$.

The real Fourier basis is
\[
e_1^{\mathrm F}(t)=1,
\qquad
e_{2r}^{\mathrm F}(t)=\sqrt2\cos(2\pi r t),
\qquad
e_{2r+1}^{\mathrm F}(t)=\sqrt2\sin(2\pi r t),
\quad r\geq1.
\]
Choose strictly positive summable eigenvalues $(\lambda_r)_{r\geq1}$. For $b\in\{\mathrm H,\mathrm F\}$, define
\[
C_bx=\sum_{r\geq1}\lambda_r
\langle x,e_r^b\rangle e_r^b,
\qquad W_b=\mathcal N(0,C_b).
\]
Each $W_b$ is non-degenerate because every eigenvalue is positive. For $\rho\in(0,1)$, the mixed projection law is
\begin{equation}
\label{eq:hf-mixture}
W_{\mathrm{HF},\rho}=\rho W_{\mathrm H}+(1-\rho)W_{\mathrm F},
\end{equation}
where the right-hand side denotes a mixture of probability measures.
The numerical work uses $\rho=1/2$.

\subsection{Finite-grid implementation}
\label{subsec:grid-implementation}

The implementation represents each discretized curve on a regular grid as a vector in $\R^p$  and uses the Euclidean inner product as a discrete approximation of the $L^2$ inner product.
 In all Haar--Fourier experiments, $p=128$. Let $B_{\mathrm H},B_{\mathrm F}\in\R^{p\times p}$ denote the orthogonal matrices whose columns are the discrete Haar and Fourier basis vectors, respectively.

The finite-grid spectrum is
\begin{equation}
\label{eq:finite-spectrum}
\lambda_{r,p}=c_p
\begin{cases}
1, & 1\leq r\leq K_{\mathrm{HF}},\\
\varepsilon_{\mathrm{HF}}(r-K_{\mathrm{HF}})^{-2}, & K_{\mathrm{HF}}<r\leq p,
\end{cases}
\qquad
c_p^{-1}=K_{\mathrm{HF}}+\varepsilon_{\mathrm{HF}}\sum_{s=1}^{p-K_{\mathrm{HF}}}s^{-2}.
\end{equation}
We use $K_{\mathrm{HF}}=8$ and $\varepsilon_{\mathrm{HF}}=10^{-3}$. Thus every basis coefficient has positive variance and $\sum_{r=1}^p\lambda_{r,p}=1$.

A direction from component $b$ is generated as follows. Draw $a=(a_1,\ldots,a_p)^\top\sim\mathcal N_p(0,I_p)$ and form
\begin{equation}
\label{eq:direction-generation}
g_b=B_b\operatorname{diag}
(\sqrt{\lambda_{1,p}},\ldots,\sqrt{\lambda_{p,p}})a,
\qquad
h_b=\frac{g_b}{\|g_b\|_2}.
\end{equation}
Since the projected KS distance is invariant under positive rescaling of the projection direction, unit-norm rescaling removes the arbitrary scale of the Gaussian draw.

Rather than draw the component label randomly, the experiments use stratified sampling: $M_{\mathrm H}$ independent directions are drawn from the Haar component and $M_{\mathrm F}$ from the Fourier component. Write
\[
M=M_{\mathrm H}+M_{\mathrm F},
\qquad \rho=M_{\mathrm H}/M.
\]
This gives the same component weighting as~\eqref{eq:hf-mixture} while fixing the number of directions drawn from each basis.

\section{Test statistic, permutation calibration, and diagnostics}
\label{sec:method}

Let $X_1,\ldots,X_n\in\R^p$ and $Y_1,\ldots,Y_m\in\R^p$ be independent samples from $P$ and $Q$. We use the empirical directional KS distance $\widehat d_{n,m}(h)$ defined in~\eqref{eq:empirical-directional-ks}, with $\langle X_i,h\rangle=X_i^\top h$ on the finite grid. Denote the Haar directions by $h_{\mathrm H,1},\ldots,h_{\mathrm H,M_{\mathrm H}}$ and the Fourier directions by $h_{\mathrm F,1},\ldots,h_{\mathrm F,M_{\mathrm F}}$. Define
\begin{align}
T_{\mathrm H}^{\mathrm{obs}}
&=\frac1{M_{\mathrm H}}\sum_{r=1}^{M_{\mathrm H}}
\widehat d_{n,m}(h_{\mathrm H,r}),
\label{eq:haar-stat}\\
T_{\mathrm F}^{\mathrm{obs}}
&=\frac1{M_{\mathrm F}}\sum_{r=1}^{M_{\mathrm F}}
\widehat d_{n,m}(h_{\mathrm F,r}),
\label{eq:fourier-stat}\\
T_{\mathrm{HF}}^{\mathrm{obs}}
&=\rho T_{\mathrm H}^{\mathrm{obs}}
+(1-\rho)T_{\mathrm F}^{\mathrm{obs}}.
\label{eq:hf-stat}
\end{align}
The statistics $T^{obs}_H$ and $T^{obs}_F$ are the mean KS distances over all Haar and Fourier directions, respectively, while $T^{obs}_{HF}$ is their convex combination.

\subsection{Permutation calibration}

The directions are drawn independently of the data and held fixed throughout calibration. Pool the observations as $Z_1,\ldots,Z_N$, where $N=n+m$. For permutation replicate $s=1,\ldots,R$, choose uniformly without replacement a subset $I_s\subset\{1,\ldots,N\}$ of size $n$. Observations indexed by $I_s$ form the permuted first sample, and the remaining observations form the permuted second sample. For a direction $h$, write $\widehat d_{n,m}^{(s)}(h)$ for the empirical directional KS distance in~\eqref{eq:empirical-directional-ks} computed with this permuted split. The component statistics $T_b^{(s)}$ are obtained from $\widehat d_{n,m}^{(s)}$ by the same averages used in~\eqref{eq:haar-stat}--\eqref{eq:hf-stat}.

For $b\in\{\mathrm H,\mathrm F,\mathrm{HF}\}$, the Monte Carlo permutation $p$-value is
\begin{equation}
\label{eq:permutation-pvalue}
\widehat p_b=
\frac{1+\sum_{s=1}^R
\mathbf1\{T_b^{(s)}\geq T_b^{\mathrm{obs}}\}}{R+1}.
\end{equation}
At level $\alpha$, the corresponding procedure rejects when $\widehat p_b\leq\alpha$. The principal test uses $\widehat p_{\mathrm{HF}}$; the componentwise tests are retained to show the effect of projection geometry. Conditional on the sampled directions, exchangeability under $H_0:P=Q$ gives finite-sample permutation validity.

The implementation computes all pooled projections once and sorts each projection once. For every observed or permuted label vector, the KS statistic is then obtained from cumulative counts along these fixed orders. This is computationally equivalent to recomputing the empirical CDFs for each permutation.

\begin{proposition}[Consistency of the permutation test]
\label{prop:test-consistency}
Suppose that $P\neq Q$, that $P$ satisfies the Moments condition, and that
\[
\frac{n}{n+m}\longrightarrow\eta\in(0,1).
\]
Let $M_{\mathrm H}$ and $M_{\mathrm F}$ be fixed positive integers. Sample the Haar and Fourier directions independently of the data from their respective non-degenerate Gaussian laws, rescale them to unit norm, and hold them fixed during permutation calibration.

Define
\[
\Delta_{\mathrm H}=\frac1{M_{\mathrm H}}\sum_{r=1}^{M_{\mathrm H}}d_{h_{\mathrm H,r}}(P,Q),
\qquad
\Delta_{\mathrm F}=\frac1{M_{\mathrm F}}\sum_{r=1}^{M_{\mathrm F}}d_{h_{\mathrm F,r}}(P,Q),
\]
and $\Delta_{\mathrm{HF}}=\rho\Delta_{\mathrm H}+(1-\rho)\Delta_{\mathrm F}$. Then, for almost every realization of the sampled directions,
\[
T_b^{\mathrm{obs}}\longrightarrow\Delta_b>0,
\qquad b\in\{\mathrm H,\mathrm F,\mathrm{HF}\}.
\] Moreover, for every fixed permutation replicate $s$,
\[
T_b^{(s)}\longrightarrow0
\quad\text{in probability}.
\]
Consequently, for a fixed number $R$ of Monte Carlo permutations,
\[
\widehat p_b\longrightarrow\frac1{R+1}
\quad\text{in probability},
\]
and the test that rejects when $\widehat p_b\leq\alpha$ is consistent whenever $(R+1)^{-1}\leq\alpha$.
\end{proposition}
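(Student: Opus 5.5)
The argument works conditionally on the sampled directions. Since the directions are independent of the data and there are finitely many of them, every convergence statement can be proved for a fixed finite set $h_1,\ldots,h_M$ and then transferred to the random directions by conditioning.

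\emph{Step 1: positivity of $\Delta_b$.} Each normalized Haar direction $h_{\mathrm H,r}=g/\|g\|$ has $g\sim W_{\mathrm H}$, and $d_h$ is invariant under positive rescaling. By Theorem~\ref{thm:mixture-separation}, applied with $W=W_{\mathrm H}$ (a trivial mixture, or directly Theorem~\ref{thm:sharp-cramer-wold}), $P\neq Q$ gives $d_g(P,Q)>0$ for $W_{\mathrm H}$-a.e.\ $g$. The same holds for $W_{\mathrm F}$. A finite intersection of probability-one events therefore yields, almost surely, $d_{h_{b,r}}(P,Q)>0$ for all $r$ and $b$. Hence $\Delta_{\mathrm H},\Delta_{\mathrm F}>0$, and $\Delta_{\mathrm{HF}}>0$ since $\rho\in(0,1)$.

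\emph{Step 2: convergence of the observed statistics.} Fix a direction $h$. The reverse triangle inequality for the sup norm gives
\[
|\widehat d_{n,m}(h)-d_h(P,Q)|\le \sup_z|\widehat F_{n,h}(z)-F_{P,h}(z)|+\sup_z|\widehat G_{m,h}(z)-F_{Q,h}(z)|.
\]
Both terms tend to $0$ almost surely by the one-dimensional Glivenko--Cantelli theorem applied to the real samples $\langle X_i,h\rangle$ and $\langle Y_j,h\rangle$. This needs $n,m\to\infty$, which follows from $n/(n+m)\to\eta\in(0,1)$. Averaging finitely many such terms gives $T_b^{\mathrm{obs}}\to\Delta_b$. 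No continuity or moment assumptions are needed here.

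\emph{Step 3: permuted statistics vanish.} This is the main obstacle. A uniformly chosen subset $I_s$ of size $n$ yields two samples drawn without replacement from the pooled sample. Conditionally on the pooled sample, the empirical CDF of each permuted group is an unbiased estimate of the pooled empirical CDF
\[
\widehat H_{N,h}=\frac nN\widehat F_{n,h}+\frac mN\widehat G_{m,h}.
\]
It therefore suffices to show that
\[
\sup_z|\widehat F^{(s)}_{n,h}(z)-\widehat H_{N,h}(z)|\to0
\]
in probability, and likewise for the second permuted group. Then the triangle inequality gives $\widehat d^{(s)}_{n,m}(h)\to0$ in probability.

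To prove this, I would use the DKW-type (or Hoeffding) bound for sampling without replacement. Serfling's inequality gives, at each fixed $z$,
\[
\Prob\bigl(|\widehat F^{(s)}_{n,h}(z)-\widehat H_{N,h}(z)|\ge\epsilon\mid Z\bigr)\le 2e^{-2n\epsilon^2}.
\]
To pass to the supremum, note that both functions are step functions with jumps only at the $N$ pooled projection values. Hence the supremum is a maximum over $N+1$ points, giving the bound $2(N+1)e^{-2n\epsilon^2}$, which tends to $0$ because $n\asymp N$. A union bound over the finitely many directions then gives $T_b^{(s)}\to0$ in probability, first conditionally and then unconditionally.

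\emph{Step 4: the $p$-value.} Fix $\epsilon=\Delta_b/2$. With probability tending to one, $T_b^{\mathrm{obs}}>\epsilon$ and $T_b^{(s)}<\epsilon$ for each of the finitely many $s=1,\ldots,R$. On this event every indicator in~\eqref{eq:permutation-pvalue} is zero, so $\widehat p_b=1/(R+1)$. Hence $\widehat p_b\to(R+1)^{-1}$ in probability, and when $(R+1)^{-1}\le\alpha$ the probability of rejection tends to one.

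The Moments condition enters only through Step 1. The deterministic convergence in Steps 2 and 3 holds for each fixed direction, and Fubini lets us combine it with the almost-sure positivity from Step 1.
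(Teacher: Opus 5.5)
Your proof is correct and follows the same route as the paper's. Positivity of each $d_{h_{b,r}}(P,Q)$ comes from the separation theorem plus scale invariance, Glivenko--Cantelli handles the observed statistics, the permuted KS distances vanish because both groups are drawn without replacement from the same pooled sample, and the fixed-$R$ $p$-value argument finishes. Your Step~3 makes the paper's one-line claim about the permuted empirical CDFs rigorous via the Hoeffding/Serfling bound and a union bound over the at most $N+1$ jump points, which also shows that step needs no assumption on $P$ and $Q$.
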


\begin{proof}
By Theorem~\ref{thm:mixture-separation}, if $P\neq Q$, the population directional KS distance is positive for almost every direction sampled from either Gaussian component. Multiplication of a direction by a nonzero scalar does not alter the KS distance between the projected laws. The Glivenko--Cantelli theorem therefore gives
\[
\widehat d_{n,m}(h)\longrightarrow d_h(P,Q)>0
\]
for every sampled direction, almost surely. Since the number of directions is fixed, each observed average converges to a strictly positive limit.

Under a random permutation, the two groups are complementary samples drawn without replacement from the same pooled empirical distribution. Their projected empirical CDFs converge uniformly to one another, so $\widehat d_{n,m}^{(s)}(h)\to0$ in probability for each fixed sampled direction. Hence $T_b^{(s)}\to0$. Because $R$ is fixed, with probability tending to one no permuted statistic exceeds the observed statistic, and the Monte Carlo $p$-value converges to its minimum possible value, $(R+1)^{-1}$.
\end{proof}

\subsection{Descriptive basis diagnostics}

The diagnostics are descriptive and do not enter the permutation decision. First, define the mean-score contrast, which compares the overall separation achieved by the two bases through their average directional KS scores
\begin{equation}
\label{eq:mean-diagnostic}
A_{\mathrm{mean}}=
\frac{T_{\mathrm H}^{\mathrm{obs}}-T_{\mathrm F}^{\mathrm{obs}}}
{T_{\mathrm H}^{\mathrm{obs}}+T_{\mathrm F}^{\mathrm{obs}}},
\end{equation}
with value zero when the denominator is zero. Positive values indicate that, on average, the sampled Haar directions produce larger directional KS distances than the sampled Fourier directions.

Second, we focus on the upper tail of the directional KS scores within each basis. For $b\in\{\mathrm H,\mathrm F\}$, let $M_b$ denote the number of sampled directions in component $b$ and define $D_{b,r}^{\mathrm{obs}}=\widehat d_{n,m}(h_{b,r})$, $r=1,\ldots,M_b$. For $q\in(0,1)$ let $k_b=\max\{1,\lceil qM_b\rceil\}$. Write $D_{b,(1)}^{\mathrm{obs}}\leq\cdots\leq D_{b,(M_b)}^{\mathrm{obs}}$ for the ordered values and define the upper-tail mean
\[
U_b(q)=\frac1{k_b}
\sum_{r=M_b-k_b+1}^{M_b}D_{b,(r)}^{\mathrm{obs}}.
\]
The upper-tail basis diagnostic is
\begin{equation}
\label{eq:diagnostic}
A_q=\frac{U_{\mathrm H}(q)-U_{\mathrm F}(q)}
{U_{\mathrm H}(q)+U_{\mathrm F}(q)},
\end{equation}
again set to zero if the denominator is zero. The implementation uses $q=0.1$. Positive values indicate stronger separation among the best sampled Haar directions, while negative values indicate stronger separation among the best Fourier directions.

Finally, set $k=\max\{1,\lceil qM\rceil\}$ and select the $k$ largest statistics among all directions. The pooled top-Haar share is the fraction of selected directions generated by the Haar component. This quantity and $A_{\mathrm{mean}}$ are saved in the raw output, whereas the tables report $A_q$.

\subsection{Computational procedure}

Algorithm~\ref{alg:mixed-projection-test} summarizes the Haar--Fourier procedure.

\begin{algorithm}[htbp]
\caption{Mixed Haar--Fourier projection two-sample test}
\label{alg:mixed-projection-test}
\begin{algorithmic}[1]
\Require Data matrices $X\in\R^{n\times p}$ and $Y\in\R^{m\times p}$; numbers of directions $M_{\mathrm H}$ and $M_{\mathrm F}$; number of permutations $R$; diagnostic fraction $q$.
\Ensure Observed statistics, permutation $p$-values, and descriptive diagnostics.
\State Construct $B_{\mathrm H}$, $B_{\mathrm F}$, and the spectrum in~\eqref{eq:finite-spectrum}.
\State Draw $M_{\mathrm H}$ Haar directions and $M_{\mathrm F}$ Fourier directions according to~\eqref{eq:direction-generation}.
\State Pool the observations and project them onto all sampled directions.
\For{each sampled direction $h$}
  \State Compute the observed directional KS statistic and store the ordering of the pooled projections.
\EndFor
\State Compute $T_{\mathrm H}^{\mathrm{obs}}$, $T_{\mathrm F}^{\mathrm{obs}}$, $T_{\mathrm{HF}}^{\mathrm{obs}}$, and the diagnostics.
\For{$s=1,\ldots,R$}
  \State Permute the labels while keeping directions and projection orders fixed.
  \For{each sampled direction $h$}
    \State Compute the permuted directional KS statistic.
  \EndFor
  \State Compute $T_{\mathrm H}^{(s)}$, $T_{\mathrm F}^{(s)}$, and $T_{\mathrm{HF}}^{(s)}$.
\EndFor
\For{$b\in\{\mathrm H,\mathrm F,\mathrm{HF}\}$}
  \State Compute $\widehat p_b$ according to~\eqref{eq:permutation-pvalue}.
\EndFor
\State \Return the observed statistics, permutation $p$-values, and diagnostics.
\end{algorithmic}
\end{algorithm}

\section{Haar--Fourier numerical example}
\label{sec:simulations}

\subsection{Design}

Functions are represented by vectors with $p=128$ coordinates. In each replication,
\[
X_i=\sigma\xi_i,
\qquad
Y_j=\sigma\zeta_j+\delta g,
\qquad i,j=1,\ldots,35,
\]
where $\xi_i$ and $\zeta_j$ are independent $\mathcal N_p(0,I_p)$ vectors, $\sigma=0.65$, and $\delta=0.8$ under every non-null alternative. Let $b_r^{\mathrm H}$ and $b_r^{\mathrm F}$ denote column $r$ of $B_{\mathrm H}$ and $B_{\mathrm F}$. The four scenarios are:
\begin{itemize}[leftmargin=2em]
\item null: $g=0$ and $\delta=0$;
\item localized: $g=b_7^{\mathrm H}$, a unit-norm Haar vector supported on one quarter of the coordinate domain;
\item frequency: $g=b_8^{\mathrm F}$, the unit-norm vector proportional to $\bigl(\cos(8\pi\ell/p)\bigr)_{\ell=0}^{p-1}$;
\item combined: $g=(b_7^{\mathrm H}+b_8^{\mathrm F})/\|b_7^{\mathrm H}+b_8^{\mathrm F}\|_2$.
\end{itemize}

The elementary perturbations are shown in Figure~\ref{fig:simulation}. Each replication draws new projection directions and new permutations. We use $M_{\mathrm H}=M_{\mathrm F}=50$, $R=199$, 150 Monte Carlo replications, and nominal level $\alpha=0.05$. Empirical power is the fraction of replications for which the corresponding permutation $p$-value is at most $0.05$.

\begin{figure}[htbp]
\centering
\begin{subfigure}[t]{0.48\textwidth}
\centering
\safeincludegraphics[width=\textwidth]{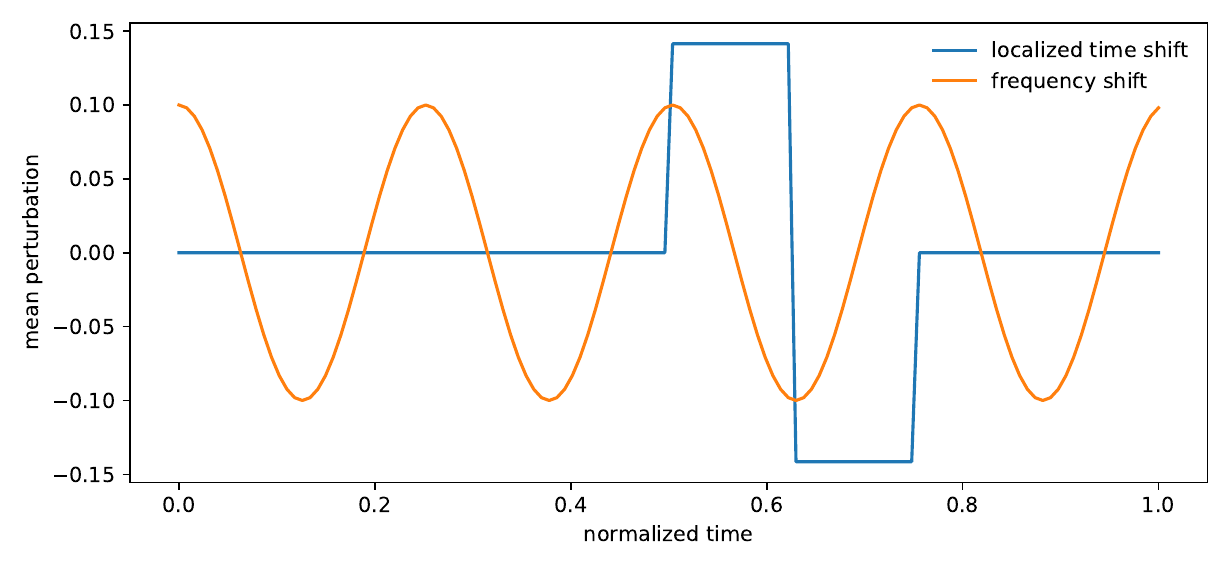}
\caption{Localized and periodic mean perturbations.}
\end{subfigure}
\hfill
\begin{subfigure}[t]{0.48\textwidth}
\centering
\safeincludegraphics[width=\textwidth]{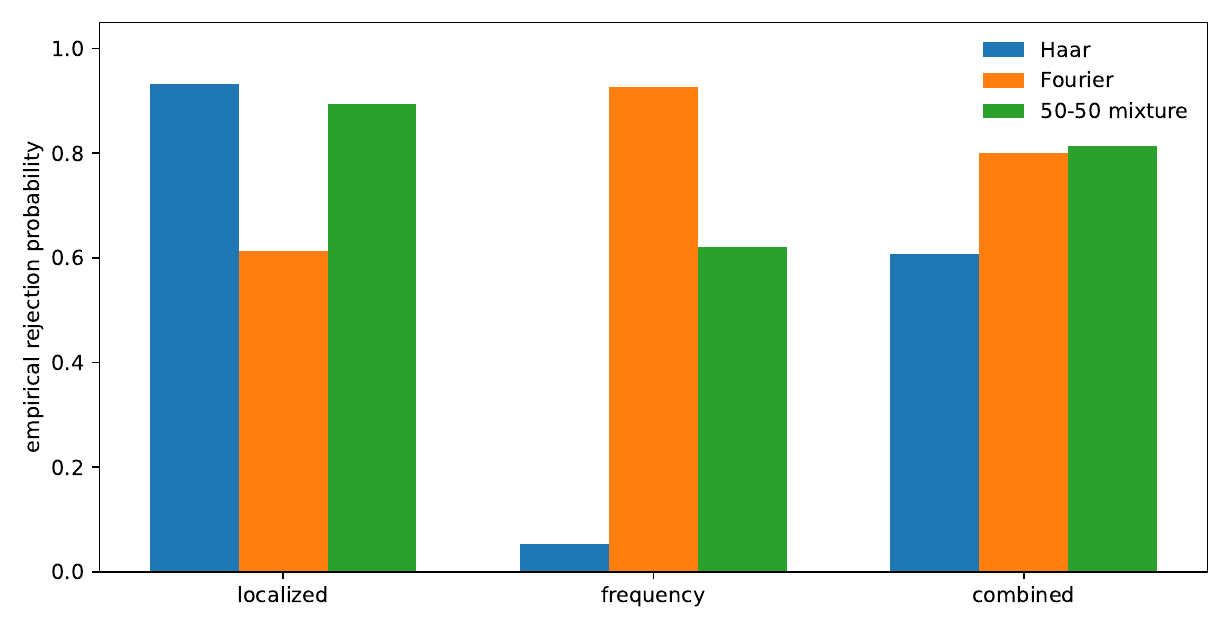}
\caption{Empirical rejection probabilities.}
\end{subfigure}
\caption{Simulation design and power at nominal level $0.05$.}
\label{fig:simulation}
\end{figure}

\subsection{Results}

Table~\ref{tab:simulation} reports empirical rejection probabilities and the median upper-tail diagnostic from~\eqref{eq:diagnostic}. The null rejection rates are close to the nominal level given Monte Carlo uncertainty. Under the localized alternative, the Haar test has power $0.933$, compared with $0.613$ for Fourier, and the mixture retains power $0.893$. Under the frequency alternative, the ordering reverses: Fourier reaches $0.927$, Haar remains near level, and the mixture reaches $0.620$. Under the combined alternative, the mixture has the largest power, $0.813$.

\begin{table}[htbp]
\centering
\small
\begin{tabular}{lrrrr}
\toprule
Scenario & Haar & Fourier & Mixture & Median $A_{0.1}$ \\
\midrule
Null & 0.033 & 0.073 & 0.067 & 0.000 \\
Localized shift & 0.933 & 0.613 & 0.893 & 0.072 \\
Frequency shift & 0.053 & 0.927 & 0.620 & $-0.200$ \\
Combined shift & 0.607 & 0.800 & 0.813 & $-0.042$ \\
\bottomrule
\end{tabular}
\caption{Empirical rejection probabilities over 150 replications and median upper-tail basis diagnostic.}
\label{tab:simulation}
\end{table}

The diagnostic has the intended sign: positive for the localized perturbation, strongly negative for the frequency perturbation, and close to zero for the combined perturbation. The mixture is a hedge rather than a uniformly most powerful choice. It pays a dilution cost when the alternative is purely Fourier, but protects against choosing the wrong geometry and performs best when both geometries are present.

\section{ECG5000 application}
\label{sec:application}

ECG5000 contains 5,000 interpolated heartbeats of length 140 divided into five classes \cite{ecg5000,bagnall2017}. Each heartbeat is linearly interpolated from the 140-point grid $\{\ell/139:\ell=0,\ldots,139\}$ to the 128-point grid $\{\ell/127:\ell=0,\ldots,127\}$.

We compare class 3, with 96 observations, against class 4, with 194 observations. For each $n\in\{10,20,30,40\}$ and each of 100 replications, the procedure independently samples $n$ observations from each class without replacement, draws a new set of directions, and runs the complete test. It uses $M_{\mathrm H}=M_{\mathrm F}=60$, $R=199$, and $\alpha=0.05$. The reported rejection probability is the fraction of balanced subsamples with $\widehat p_b\leq0.05$.

\begin{figure}[htbp]
\centering
\begin{subfigure}[t]{0.48\textwidth}
\centering
\safeincludegraphics[width=\textwidth]{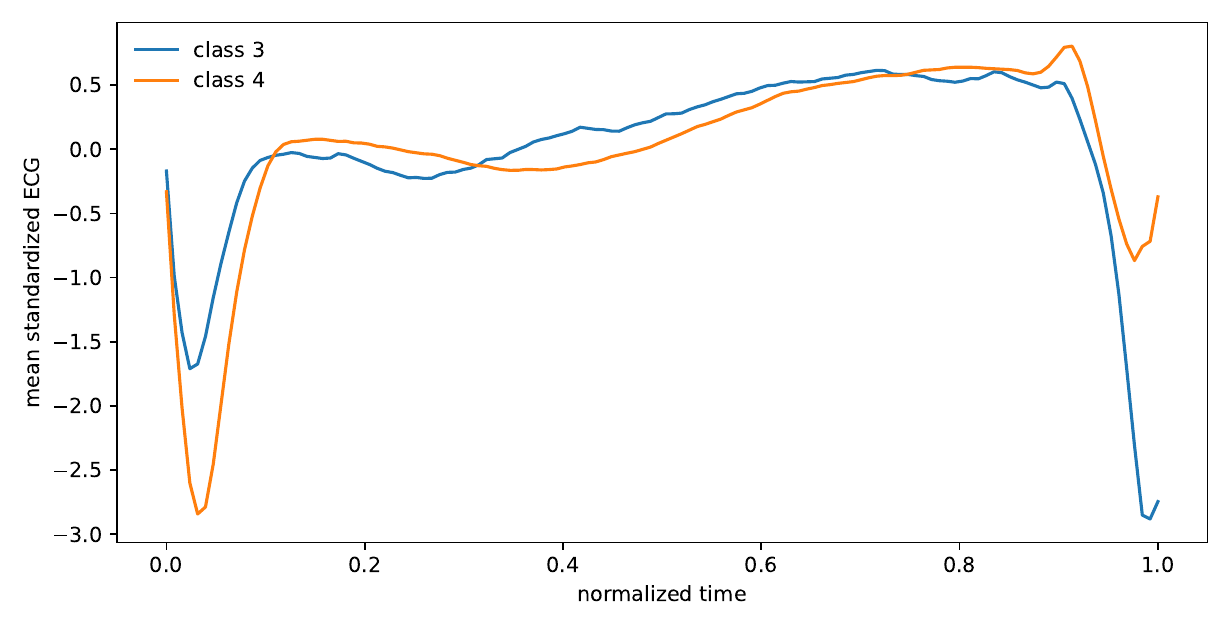}
\caption{Mean standardized heartbeats.}
\end{subfigure}
\hfill
\begin{subfigure}[t]{0.48\textwidth}
\centering
\safeincludegraphics[width=\textwidth]{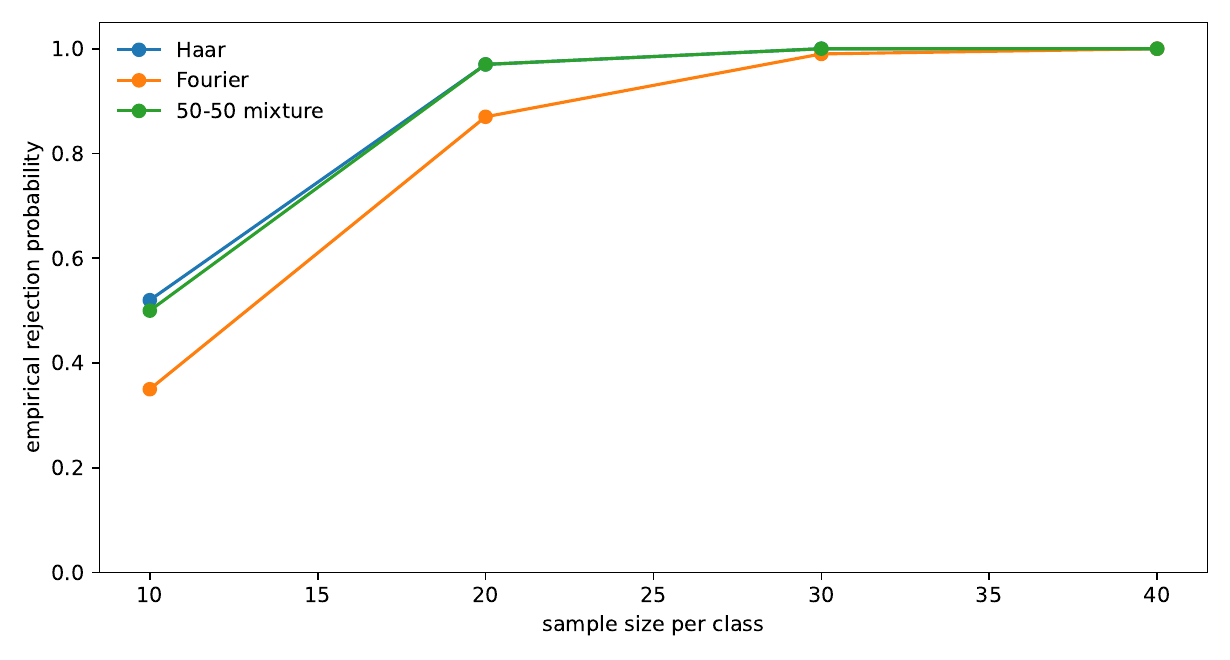}
\caption{Rejection probability versus sample size.}
\end{subfigure}
\caption{ECG5000 classes 3 and 4.}
\label{fig:ecg-power}
\end{figure}

At $n=10$, the Haar component rejects in $52\%$ of the repetitions, the Fourier component in $35\%$, and the mixture in $50\%$. At $n=20$, Haar and the mixture both reach $97\%$, while Fourier reaches $87\%$. All three procedures are essentially saturated by $n=30$.

\begin{table}[htbp]
\centering
\small
\begin{tabular}{rrrrr}
\toprule
$n$ & Haar & Fourier & Mixture & Median $A_{0.1}$ \\
\midrule
10 & 0.52 & 0.35 & 0.50 & 0.033 \\
20 & 0.97 & 0.87 & 0.97 & 0.054 \\
30 & 1.00 & 0.99 & 1.00 & 0.073 \\
40 & 1.00 & 1.00 & 1.00 & 0.092 \\
\bottomrule
\end{tabular}
\caption{ECG5000 class 3 versus class 4: rejection probabilities over 100 balanced subsampling repetitions.}
\label{tab:ecg}
\end{table}

The diagnostic is positive at every sample size and grows with $n$, indicating that the strongest sampled discrepancies are more localized than oscillatory. For descriptive visualization, 1,000 additional unit-norm directions are drawn from each component using the full class samples. Within each component, the direction with the largest observed directional KS statistic is selected. Its sign is chosen so that the projected mean of class 3 does not exceed that of class 4; this sign change leaves the KS statistic unchanged. The best Haar direction attained a projected KS distance of $0.772$, compared with $0.714$ for the best Fourier direction. Figure~\ref{fig:ecg-directions} displays the selected directions and projected empirical distributions. Direction selection is descriptive and is excluded from permutation calibration.

\begin{figure}[htbp]
\centering
\begin{subfigure}[t]{0.48\textwidth}
\centering
\safeincludegraphics[width=\textwidth]{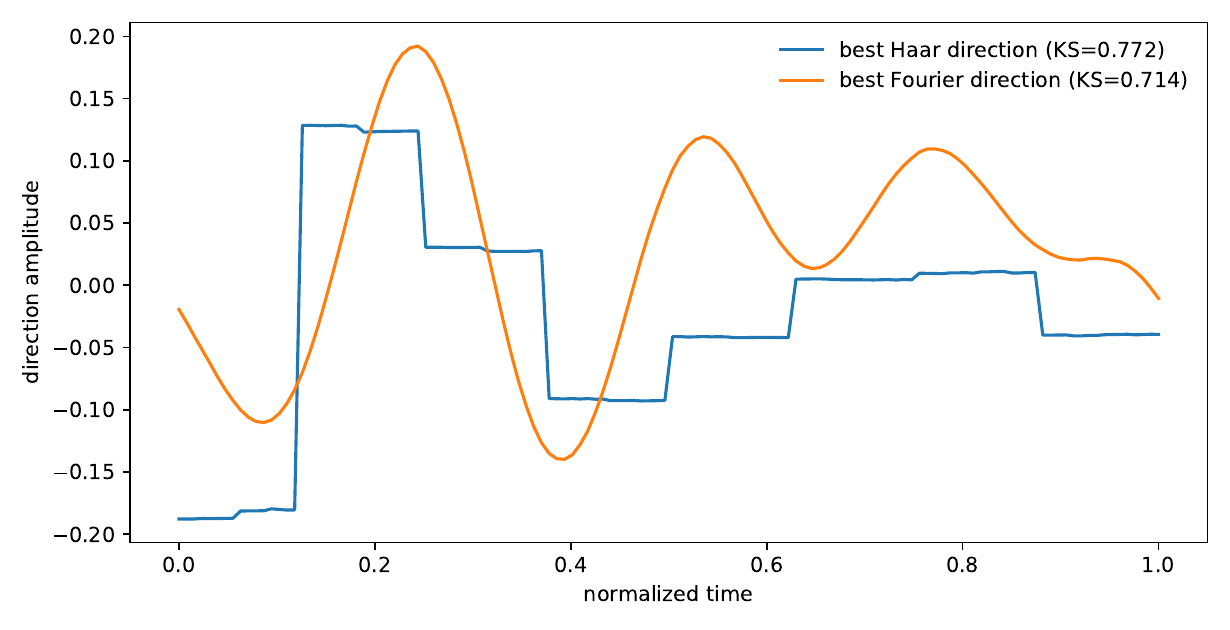}
\caption{Best sampled direction from each component.}
\end{subfigure}
\hfill
\begin{subfigure}[t]{0.48\textwidth}
\centering
\safeincludegraphics[width=\textwidth]{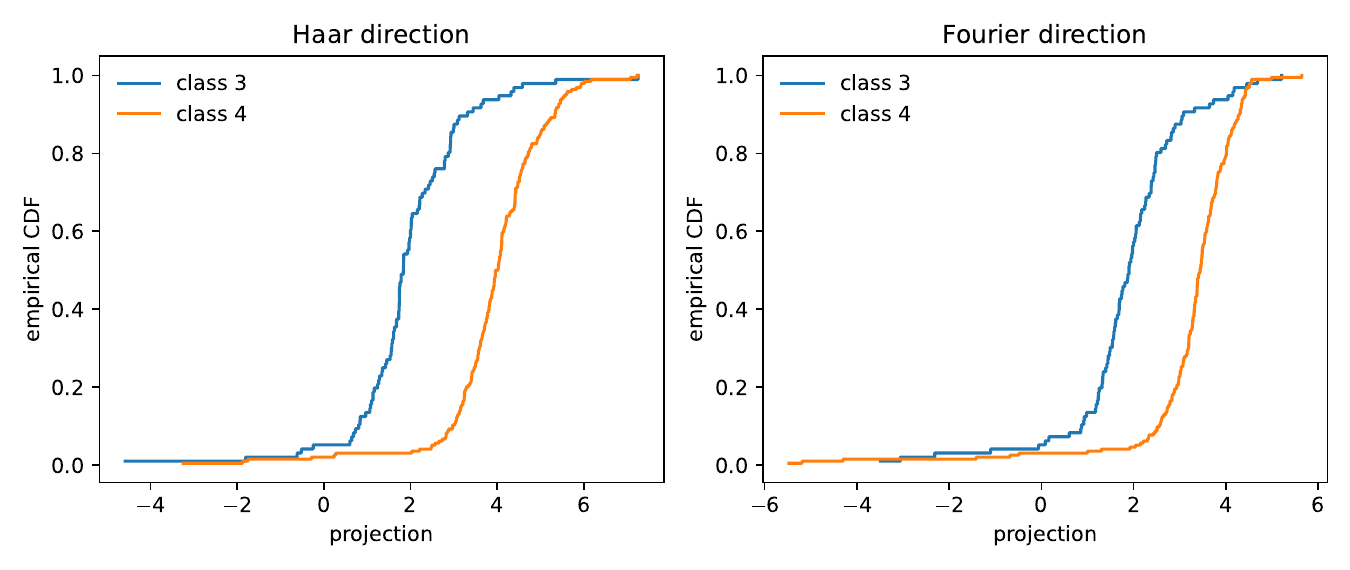}
\caption{Empirical CDFs after projection.}
\end{subfigure}
\caption{Directions that most strongly separate ECG5000 classes 3 and 4. Direction selection uses the full data and is not part of test calibration.}
\label{fig:ecg-directions}
\end{figure}

\section{An adaptive projection component based on pooled FPCA}
\label{sec:fpca}

The preceding sections considered fixed covariance operators selected to emphasize localized or oscillatory structure. We now investigate an additional, data-adaptive projection geometry constructed from the empirical covariance of the pooled sample. The purpose of this extension is to examine whether directions aligned with dominant modes of variation in the observed data can complement the fixed Haar and Fourier geometries.

\subsection{Pooled FPCA construction}
\label{subsec:fpca-construction}

Let
\[
(Z_1,\ldots,Z_N)
=(X_1,\ldots,X_n,Y_1,\ldots,Y_m),
\qquad N=n+m,
\]
denote the pooled sample observed on a common grid $t_1,\ldots,t_p$. Each curve is represented in a cubic B-spline basis with 25 basis functions. The spline coefficients are estimated by penalized least squares using a second-derivative roughness penalty, yielding smooth functional observations \cite{ramsay2005}.

We denote the empirical eigenfunctions
\[
\widehat\phi_1,\ldots,\widehat\phi_{K_{\mathrm P}}
\]
and associated eigenvalues
\[
\widehat\lambda_1,\ldots,\widehat\lambda_{K_{\mathrm P}},
\]
where $K_{\mathrm P}=15$ in all reported simulations.

The empirical eigenfunctions are evaluated on the observation grid and normalized to unit Euclidean norm, producing
\[
B_{\mathrm P}=[b_1^{\mathrm P},\ldots,b_{K_{\mathrm P}}^{\mathrm P}]\in\R^{p\times K_{\mathrm P}},\qquad b_r^{\mathrm P}=\frac{(\widehat\phi_r(t_1),\ldots,\widehat\phi_r(t_p))^\top}{\|(\widehat\phi_r(t_1),\ldots,\widehat\phi_r(t_p))^\top\|_2}.
\]
The eigenvalue weights are regularized and normalized as
\begin{equation}
\label{eq:fpca-eigenvalues}
\lambda_r^{\mathrm P}=
\frac{\widehat\lambda_r+\varepsilon_{\mathrm P}}
{\sum_{k=1}^{K_{\mathrm P}}(\widehat\lambda_k+\varepsilon_{\mathrm P})},
\qquad r=1,\ldots,K_{\mathrm P},
\end{equation}
where $\varepsilon_{\mathrm P}=10^{-8}$. This guarantees positive variance along every retained empirical principal component and has negligible effect on their relative weights.

A conditional FPCA direction is generated by drawing $a\sim\mathcal N_{K_{\mathrm P}}(0,I_{K_{\mathrm P}})$ and setting
\begin{equation}
\label{eq:fpca-direction}
g_{\mathrm P}
=B_{\mathrm P}\operatorname{diag}
(\sqrt{\lambda_1^{\mathrm P}},\ldots,\sqrt{\lambda_{K_{\mathrm P}}^{\mathrm P}})a,
\qquad
h_{\mathrm P}=\frac{g_{\mathrm P}}{\|g_{\mathrm P}\|_2}.
\end{equation}
Conditionally on the pooled sample, $g_{\mathrm P}$ is Gaussian with covariance matrix
\begin{equation}
\label{eq:fpca-covariance}
\widehat C_{\mathrm P}
=\sum_{r=1}^{K_{\mathrm P}}\lambda_r^{\mathrm P} b_r^{\mathrm P}(b_r^{\mathrm P})^\top.
\end{equation}
Because $K_{\mathrm P}<p$, this covariance has rank at most $K_{\mathrm P}$. The resulting projection law is therefore supported on the retained empirical eigenspace and is not non-degenerate on the full ambient space.

The FPCA basis is recomputed from the pooled sample in each simulation replication, allowing the projection geometry to adapt to the observed covariance structure. The construction uses no group labels.

\subsection{Permutation validity}
\label{subsec:fpca-permutation}

The data-dependent construction does not invalidate permutation calibration because it is a symmetric function of the pooled observations. The pooled smoothing, empirical covariance, FPCA basis, eigenvalues, and random directions are computed once and then held fixed throughout the label permutations.

\begin{proposition}[Permutation validity of the pooled FPCA procedure]
\label{prop:fpca-validity}
Assume $H_0:P=Q$. Suppose that the smoothing rule, number of retained components, eigenvalue regularization, and all other FPCA tuning choices are fixed in advance or selected without using the sample labels. Construct the FPCA basis from the pooled sample, sample the projection directions, and hold the complete construction fixed during permutation calibration. Then the permutation test based on the FPCA directions is conditionally valid at its nominal level.
\end{proposition}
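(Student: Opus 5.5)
The plan is the standard conditioning argument for permutation tests: show that, under $H_0$, the pooled construction depends only on the \emph{unordered} sample, so that conditionally on it the label assignment is uniformly distributed and every relabeling is equally likely to produce the observed statistic.

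First, formalize the construction as a map. Let $Z=(Z_1,\ldots,Z_N)$ be the pooled vector and $U$ auxiliary randomness independent of $Z$ (the Gaussian draws $a$ and the permutation draws). The FPCA output $(B_{\mathrm P},\lambda^{\mathrm P})$ is $\Phi(Z)$. By hypothesis the smoothing rule, $K_{\mathrm P}$, $\varepsilon_{\mathrm P}$ and all other tuning choices are label-free. The penalized spline fit acts curve by curve, and the empirical mean and covariance are symmetric sums. Hence $\Phi(Z_{\sigma(1)},\ldots,Z_{\sigma(N)})=\Phi(Z)$ for every permutation $\sigma$. Eigenvector sign and ordering conventions are deterministic functions of the covariance, so they inherit this invariance; if ties are broken at random, that randomness is placed in $U$. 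The directions $h_{\mathrm P,1},\ldots,h_{\mathrm P,M}$ are then a function of $(\Phi(Z),U)$ and therefore depend on $Z$ only through the empirical measure of the pooled sample, i.e.\ its order statistic $Z_{(\cdot)}$.

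Second, use exchangeability. Under $H_0$ the vector $Z$ is i.i.d., so conditionally on $Z_{(\cdot)}$ the observed labeling is uniform over all $\binom{N}{n}$ splits. Because $U$ is independent of $Z$ and $\Phi$ depends only on $Z_{(\cdot)}$, this uniformity still holds conditionally on $(Z_{(\cdot)},U)$, and hence conditionally on the fixed directions. Given this conditioning, the statistic $T(I)=M^{-1}\sum_r \widehat d_{n,m}^{I}(h_{\mathrm P,r})$ is a fixed function of the split $I$. The observed split $I_0$ and the Monte Carlo splits $I_1,\ldots,I_R$, drawn uniformly and independently, are therefore i.i.d.\ uniform. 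The vector $(T(I_0),\ldots,T(I_R))$ is thus exchangeable, and the usual rank argument for \eqref{eq:permutation-pvalue} gives $\Prob(\widehat p\le\alpha\mid Z_{(\cdot)},U)\le\alpha$; ties only make the inequality conservative. Integrating over the conditioning variables gives unconditional validity.

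The main obstacle is the first step: verifying that the pooled construction is genuinely permutation-invariant, including implementation details such as eigenvector sign, the ordering of eigenvalues with ties, and any data-driven smoothing parameter. I would handle this by stating the invariance as a standing requirement and pointing out that it is exactly what the hypothesis ``selected without using the sample labels'' guarantees. Any residual arbitrariness can be treated as exogenous randomness in $U$. I would also remark that non-degeneracy of $\widehat C_{\mathrm P}$ plays no role here: validity uses only exchangeability, whereas the separation results of Theorem~\ref{thm:mixture-separation} matter only for power.
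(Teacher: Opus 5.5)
Your proposal is correct and follows the same route as the paper's proof. The pooled FPCA construction is label-free and therefore invariant under relabeling, and the group allocation is then exchangeable conditional on the unordered pooled sample and the fixed directions; you simply make the implementation details and the Monte Carlo rank argument explicit.

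One bookkeeping fix is needed. You placed the permutation draws $I_1,\ldots,I_R$ inside $U$, so you should condition only on the direction-generating part of $U$, as the paper does by conditioning on the pooled sample and the sampled directions. Conditioning on all of $U$ would freeze the Monte Carlo splits, which are then no longer i.i.d.\ uniform, and the bound $\Prob(\widehat p\le\alpha\mid Z_{(\cdot)},U)\le\alpha$ need not hold.
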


\begin{proof}
Under $H_0$, the pooled observations are identically distributed, and conditional on the unordered pooled sample every allocation of $n$ observations to the first group is equally likely. The FPCA construction is invariant to this allocation because it uses only the pooled observations and label-independent auxiliary randomness. Conditional on the pooled sample and the sampled directions, the observed labeling is therefore exchangeable with its permutations. The standard permutation argument gives conditional validity.
\end{proof}

\subsection{Combination with the fixed components}
\label{subsec:hfp-statistic}

Within the three-component procedure, let $M_{\mathrm H}$, $M_{\mathrm F}$, and $M_{\mathrm P}$ denote the numbers of Haar, Fourier, and FPCA directions assigned to that procedure. Compute $T_{\mathrm H}^{\mathrm{obs}}$ and $T_{\mathrm F}^{\mathrm{obs}}$ as in~\eqref{eq:haar-stat} and~\eqref{eq:fourier-stat} using those directions. Draw FPCA directions $h_{\mathrm P,1},\ldots,h_{\mathrm P,M_{\mathrm P}}$ independently from the conditional construction in~\eqref{eq:fpca-direction}, given the pooled sample, and define
\[
T_{\mathrm P}^{\mathrm{obs}}
=\frac1{M_{\mathrm P}}\sum_{r=1}^{M_{\mathrm P}}\widehat d_{n,m}(h_{\mathrm P,r}).
\]
We investigate the equal-weight three-component statistic
\begin{equation}
\label{eq:hfp-statistic}
T_{\mathrm{HFP}}^{\mathrm{obs}}
=\frac13T_{\mathrm H}^{\mathrm{obs}}
+\frac13T_{\mathrm F}^{\mathrm{obs}}
+\frac13T_{\mathrm P}^{\mathrm{obs}}.
\end{equation}
The two-component comparator remains
\[
T_{\mathrm{HF}}^{\mathrm{obs}}
=\frac12T_{\mathrm H}^{\mathrm{obs}}
+\frac12T_{\mathrm F}^{\mathrm{obs}}.
\]
Every reported procedure uses the same total direction budget, and each mixture divides its budget as evenly as integer constraints permit among its active components. The component counts are procedure-specific; the symbols $M_{\mathrm H}$, $M_{\mathrm F}$, and $M_{\mathrm P}$ in~\eqref{eq:hfp-statistic} refer to the three-component run. For permutation replicate $s$, define $T_{\mathrm P}^{(s)}$ from $\widehat d_{n,m}^{(s)}$ and set
\[
T_{\mathrm{HFP}}^{(s)}=\frac13T_{\mathrm H}^{(s)}+\frac13T_{\mathrm F}^{(s)}+\frac13T_{\mathrm P}^{(s)}.
\]
The FPCA-only and H--F--P permutation $p$-values, $\widehat p_{\mathrm P}$ and $\widehat p_{\mathrm{HFP}}$, are defined by~\eqref{eq:permutation-pvalue} with $b=\mathrm P$ and $b=\mathrm{HFP}$, respectively. All component directions and the pooled FPCA construction are held fixed within each permutation run.

The Haar and Fourier components retain the theoretical properties established in Sections~\ref{sec:theory}--\ref{sec:method}. The FPCA component is included as an adaptive empirical direction generator. The theoretical claims for fixed non-degenerate mixtures are not attributed to the finite-rank FPCA component.

\subsection{Additional simulation design}
\label{subsec:fpca-design}

The FPCA experiment first revisits the null, localized, frequency, and combined alternatives from Section~\ref{sec:simulations}, using the same sample-size setting $n=m=35$. Because this is a separate Monte Carlo experiment in which the FPCA basis and all projection directions are regenerated, the empirical rejection probabilities do not coincide exactly with Table~\ref{tab:simulation}.

Two further alternatives compare centered Gaussian processes through their second-order structure. The first is a covariance-length-scale alternative. Both populations have squared exponential covariance kernel
\begin{equation}
\label{eq:sqexp-kernel}
\mathcal K_{\sigma,\ell}(s,t)=\sigma^2
\exp\!\left(-\frac{(s-t)^2}{2\ell^2}\right),\qquad s,t\in[0,1],
\end{equation}
with $\sigma=0.65$. The first population uses $\ell=0.20$, while the second uses $\ell=0.03$. The populations therefore have the same mean function and marginal variance but substantially different smoothness.

The second is an amplitude alternative. Both populations use the same squared exponential correlation structure with $\ell=0.20$, but the first has $\sigma=0.65$ and the second has $\sigma=0.85$. Thus the populations differ in overall variability while sharing their mean and correlation structure.

For these second-order alternatives, the sample sizes are increased to $n=m=60$, because changes in covariance and amplitude were more difficult to detect through one-dimensional projections at the sample size used for the mean-shift experiments. Every row in Table~\ref{tab:power-fpca} is based on 150 Monte Carlo replications and permutation calibration at nominal level $0.05$.

\subsection{Results}
\label{subsec:fpca-results}

Table~\ref{tab:power-fpca} reports empirical rejection probabilities. Under the null, all procedures remain close to the nominal significance level, consistent with the conditional permutation-validity argument in Proposition~\ref{prop:fpca-validity}.

Under the localized alternative, Haar remains the most powerful individual component, reaching $0.927$, compared with $0.533$ for Fourier and $0.733$ for FPCA. The Haar--Fourier mixture attains $0.880$, while the three-component mixture reaches $0.860$. The adaptive component therefore does not remove the dilution cost incurred when a fixed component is closely matched to the alternative.

Under the frequency alternative, Fourier achieves the largest individual power, $0.900$, followed by FPCA at $0.780$, whereas Haar remains close to the nominal level. Adding FPCA improves the combined procedure: the three-component mixture reaches $0.787$, compared with $0.587$ for Haar--Fourier.

The combined alternative illustrates the benefit of complementary geometries. Fourier and FPCA individually attain powers $0.860$ and $0.840$, while the Haar--Fourier--FPCA statistic reaches the largest rejection probability, $0.913$. This exceeds the Haar--Fourier mixture at $0.873$ and each individual component.

Under the covariance-length-scale alternative, FPCA clearly outperforms the fixed components, attaining power $0.880$, while the three-component mixture reaches $0.773$. This behavior is consistent with an adaptive basis aligned with dominant pooled covariance variation. Under the amplitude alternative, all procedures have similar rejection probabilities, between approximately $0.21$ and $0.23$. This departure does not particularly favor any of the projection geometries considered.

\begin{table}[htbp]
\caption{Empirical rejection probabilities over 150 replications. H--F--P denotes the mixture of Haar, Fourier, and FPCA components; H--F denotes the Haar--Fourier mixture. The first four rows use $n=m=35$ and the final two rows use $n=m=60$.}
\label{tab:power-fpca}
\centering
\small
\begin{tabular}{lccccc}
\toprule
& Haar & Fourier & FPCA & H--F--P & H--F \\
\midrule
Null & 0.040 & 0.040 & 0.047 & 0.020 & 0.027 \\
Localized shift & 0.927 & 0.533 & 0.733 & 0.860 & 0.880 \\
Frequency shift & 0.073 & 0.900 & 0.780 & 0.787 & 0.587 \\
Combined shift & 0.680 & 0.860 & 0.840 & 0.913 & 0.873 \\
Covariance length scale & 0.560 & 0.533 & 0.880 & 0.773 & 0.587 \\
Amplitude & 0.213 & 0.227 & 0.207 & 0.233 & 0.220 \\
\bottomrule
\end{tabular}
\end{table}

\section{Discussion}
\label{sec:discussion}

The theoretical result is broader than the Haar--Fourier example. Any probability mixture supported on non-degenerate Gaussian measures preserves separation of functional laws. This permits mixtures over bases, regularity parameters, localization scales, or covariance templates without changing the central separation argument.

The power analysis clarifies why this broad asymptotic guarantee does not make the choice of projection law irrelevant. Finite-sample concentration occurs around $D_W(P,Q)$, and the exponential power bound depends on the size of this integrated distance. Covariance operators that align the sampled directions with a discrepancy can therefore provide substantially larger power even though every admissible fixed Gaussian law is consistent.

The Haar--Fourier experiments illustrate the resulting specialization--robustness tradeoff. A component matched to a pure alternative can outperform the mixture because the mixture allocates some directions to a less informative geometry. The benefit of mixing is protection against severe mismatch and improved performance when multiple structures are present. The mixing weight should encode prior uncertainty: equal weighting is natural when localized and oscillatory departures are similarly plausible, while application-specific weights may be preferable when substantive information is available.

The pooled FPCA investigation  estimates a dominant variation subspace from the pooled observations. This can improve power when relevant differences are represented in the empirical covariance structure, as in the covariance-length-scale experiment. It can also dilute a well-matched fixed component, and it offers no universal advantage under the amplitude alternative. The finite-rank construction is degenerate in the ambient functional space, so it should not be placed under the non-degenerate mixture theorem without an additional argument. Its current justification consists of label-invariant permutation validity and empirical performance.

The ECG5000 application demonstrates how component labels and selected directions can aid interpretation without suggesting that the adaptive FPCA extension was evaluated on those data.

Several extensions remain possible. Fixed mixtures may include spline, wavelet-packet, or other covariance templates, and their regularity parameters may themselves be randomized. Adaptive weights could be estimated on an independent training split. For the FPCA procedure, a separate theoretical analysis could study conditions under which estimated eigenspaces preserve separation, possibly through sample splitting or an explicit non-degenerate regularization. 

\section*{Reproducibility}

The accompanying Haar--Fourier implementation constructs the finite-grid bases, samples directions, computes statistics and diagnostics, and performs permutation calibration. With the ECG5000 files placed in the \texttt{data/} directory, the original numerical outputs are generated by
\begin{verbatim}
python analysis_haar_fourier.py \
  --out results_final \
  --sim-reps 150 \
  --real-reps 100 \
  --permutations 199 \
  --seed 20260720 \
  --data-dir data
\end{verbatim}
The FPCA simulation code additionally smooths each pooled sample, computes the empirical principal components, generates the adaptive directions in~\eqref{eq:fpca-direction}, and applies the same label-permutation principle.

\end{document}